\newtheorem{observation}{Observation}
\begin{document}
\ArticleType{RESEARCH PAPER}
\Year{2022}
\Month{}
\Vol{}
\No{}
\DOI{}
\ArtNo{}
\ReceiveDate{}
\ReviseDate{}
\AcceptDate{}
\OnlineDate{}

\title{Reconfiguring wireless environment via intelligent surfaces for 6G: reflection, modulation, and security}
{Reconfiguring wireless environment via intelligent surfaces for 6G: reflection, modulation, and security}

\author[1]{Jindan XU}{}
\author[1]{Chau YUEN}{yuenchau@sutd.edu.sg}
\author[2]{Chongwen HUANG}{}
\author[3]{Naveed UL HASSAN}{}
\author[4,6]{\\George C. ALEXANDROPOULOS}{}
\author[5]{Marco DI RENZO}{}
\author[6]{M\'{e}rouane DEBBAH}{}

\AuthorMark{Author A}

\AuthorCitation{Xu J D, Yuen C, Huang C W, Hassan N U, Alexandropoulos G C, Renzo M D , Debbah M}


\address[1]{Singapore University of Technology and Design, Singapore {\rm 487372}, Singapore}
\address[2]{College of Information Science and Electronic Engineering, Zhejiang University, Hangzhou {\rm310007}, China}
\address[3]{Department of Electrical Engineering, Lahore University of Management Sciences, Lahore {\rm54792}, Pakistan}
\address[4]{Department of Informatics and Telecommunications, National and Kapodistrian University of Athens, Athens {\rm15784}, Greece}
\address[5]{Universit\'e Paris-Saclay, CNRS, CentraleSup\'elec,
Laboratoire des Signaux et Syst\`emes, Gif-sur-Yvette {\rm91192}, France}
\address[6]{Technology Innovation Institute, Masdar
City {\rm9639}, Abu Dhabi, United Arab Emirates}


\abstract{
Reconfigurable intelligent surface (RIS) has been recognized as an essential enabling technique for the sixth-generation (6G) mobile communication network.
Specifically, an RIS is comprised of a large number of small and low-cost reflecting elements whose parameters are dynamically adjustable with a programmable controller. 
Each of these elements can effectively reflect a phase-shifted version of the incident electromagnetic wave. 
By adjusting the wave phases in real time, the propagation environment of the reflected signals can be dynamically reconfigured to enhance communication reliability, boost transmission rate, expand cellular coverage, and strengthen communication security. 
In this paper, we provide an overview on RIS-assisted wireless communications.
Specifically, we elaborate on the state-of-the-art enabling techniques of RISs as well as their corresponding substantial benefits from the perspectives of RIS reflection and RIS modulation.
With these benefits, we envision the integration of RIS into emerging applications for 6G.
In addition, communication security is of unprecedented importance in the 6G network with ubiquitous wireless services in multifarious verticals and areas.
We highlight potential contributions of RIS to physical-layer security in terms of secrecy rate and secrecy outage probability, exemplified by a typical case study from both theoretical and numerical aspects.
Finally, we discuss challenges and opportunities on the deployment of RISs in practice to motivate future research.
}

\keywords{reconfigurable intelligent surface, physical-layer security, RIS-assisted wireless communications, phase modulation, 6G}

\maketitle{}

\section{Introduction}
 

Owing to the high data traffic and massive connections for vertical applications, e.g., internet-of-everything (IoE), integrated sensing and communications (ISAC), and space-air-ground-ocean (SAGO) communications, in future wireless networks, the sixth-generation (6G) mobile communication network is expected to provide extremely superior performance compared to the fifth-generation (5G) network. Specifically, the key performance indicators (KPIs) of 6G require to support wireless communications with peak data rate in the order of Tbps, user-experienced data rate upto 1~Gbps, 100~GHz maximum bandwidth, etc \cite{6G_Zhang}. 
To meet these stringent requirements, it is a widely-recognized path toward 6G to further evolve the massive multiple-input multiple-output (MIMO) technology and deeply explore high-frequency spectrum.

By exploiting hundreds, or even thousands, of antennas to conduct beamforming, the hardware cost and power consumption of massive MIMO increase exponentially with the number of antennas \cite{MassMIMO_Channel, MassMIMO_Loading, MassMIMO_DAC}.
As a potential cost-effective and energy-efficient solution, reconfigurable intelligent surface (RIS) has recently attracted much attention and has been regarded as an evolutionary extension of massive MIMO \cite{RIS_Survey_ComSoc, RIS_Survey_Jian, RIS_Survey_Liang, RIS_JSAC, RIS_Overview_Pan, RIS_MARISA}.
Specifically, an RIS consists of a large number of small and low-cost passive reflecting elements whose parameters are dynamically reconfigurable with a programmable controller \cite{RIS_Mag}. Each of these elements can effectively reflect a phase-shifted version of the incident electromagnetic wave. By adjusting the wave phases in real time, the propagation environment is dynamically reconfigured to enhance communication efficiency \cite{RIS_EE_You, RIS_EE_Meta, RIS_Mag_smart, RIS_EuCNC_1, RIS_EuCNC_2, RIS_EE_SE, RIS_EE_Yuen, RIS_Phasenoise_EEVM, RIS_Disc_Hybrid}, expand cellular coverage \cite{Stat_CSI_Coverage_Yang, Stat_CSI_Coverage, Stat_BoardCoverage}, and strengthen communication security \cite{RIS_PLS_MultiRIS, RIS_PLS_SR_MIMO, RIS_PLS_SR_AN, RIS_PLS_SR_AN_MulUe, RIS_PLS_SOP, RIS_PLS_SOP_SR,  RIS_PLS_SOP_AR_PhaseNoise, RIS_PLS_SR_AN_EveRIS}. 
These potentials of RIS in physical layer also inspire feasibility design of medium access control (MAC) to enable multiple-device communications with the assistance of RISs \cite{RIS_MAC_AI, RIS_MAC}.
Unlike the existing technology of amplify-and-forward relay \cite{Relay_3, Relay_1, Relay_2}, RIS requires no radio frequency (RF) chains and performs as a passive array which does not have the ability of signal processing.
Since this passive array surface induces negligible power consumption, it is easy for RIS to be integrated into wireless environments, e.g., by being deployed into ceilings, building facades, etc.

On the other hand, the 6G KPI in terms of transmission rate ranging from Gbps upto Tbps requires large bandwidth, which is only available in the high-frequency spectrum, e.g., millimeter-wave (mm-Wave), terahertz (THz), and visible light (VL) frequency bands.
Compared to the popularly adopted sub-6 GHz spectrum, high-frequency electromagnetic radiation experiences high free-space loss and severe oxygen absorption.
Due to these effects, communications in these high-frequency band are dominated by highly directional line-of-sight (LoS) propagations \cite{mmWave_pathloss}.
Such communications are vulnerable to blockage by various obstacles, such as walls, furniture, humans, and vehicles, especially in complicated indoor and dense urban environments \cite{mmWave_jdxu}.
RIS is an efficient solution to deal with the vulnerability of high-frequency spectrum by providing additional reflected paths \cite{RIS_PLS_SR_mmWave_DAC, RIS_Thz_Holo_GaoFF, RIS_DRL_MultiHop_THZ, RIS_VLC, VLC, RIS_LiFi, RIS_Prop}.

In addition, communication security has been attracting vast investigations for the 6G network \cite{PLS_IoT, PLS_DAC_jdxu}. With ubiquitous wireless services in multifarious areas, communications are becoming more vulnerable to eavesdropping and we are going to face unprecedented security issues in the 6G network. In most current applications, the security is guaranteed by using encryption techniques which rely on the assumption of relatively limited computation ability of eavesdroppers. However, it has been initially motivated by the landmark discovery in \cite{PLS_Shannon} that strict secure communications are contrivable from the physical layer. Techniques of this kind, namely physical-layer security, are designed by exploiting physical-layer characteristics like measurements of channel propagations.
Particularly in \cite{PLS_mmWave_jdxu}, for the first time, the physical-layer security has been shown achievable under high-frequency mm-Wave channels by developing a practical secure transmission scheme in the beam domain. Exploring this high-frequency spectrum, RIS can natually be utilized to reconfigure the propagation environment so that the physical-layer security is enhanced by improving the communication reliability of legitimate terminals while suppressing the leakage of information to eavesdroppers. 

In this paper, we give an overview on RIS-assisted wireless communications for the 6G network.
Specifically, we focus on the two main functionalities of RISs, i.e., reflection and modulation, and elaborate on the state-of-the-art enabling techniques from these two perspectives. We expose that RIS provides a cost-and-power efficient solution to coverage extension and rate boosting for 6G communications.
Thanks to these benefits, we envision opportunities for integrating RISs with a wide variety of emerging applications in the 6G network, including high-frequency transmissions, ISAC, and SAGO communications.
Besides, we evince the paramount importance of security in the 6G network and highlight the significant contributions of RIS to guarantee secure communications at the physical layer.
In particular, RIS is capable of enhancing secrecy performances in terms of both metrics, i.e., secrecy rate and secrecy outage probability (SOP).
Finally, we point out a multitude of challenging open issues for further research.


The rest of this paper is organized as follows.
In Section~\ref{Sec_refl_modu}, we elaborate on the enabling techniques of RIS reflection and RIS modulation, and exposes their corresponding substantial benefits to wireless communications.
In Section~\ref{Sec_appl}, we discuss opportunities for integrating RIS with emerging 6G applications.
Due to the unprecedented importance of security in the 6G network, we elaborate in Section~\ref{Sec_PLS} on the contributions of RIS to physical-layer security, exemplified by a case study.
In Section~\ref{Sec_chal}, we point out several prominent challenges and opportunities for further research.
Conclusions are drawn in Section~\ref{Sec_conc}. 


\section{RIS reflection and modulation}
\label{Sec_refl_modu}

%

In order to reap the potential benefits of RIS, there are generally two ways of utilizing RIS for communication performance enhancement as illustrated in Fig.~\ref{Fig_RIS_Refl_Modu}. 
In particular, RIS can serve as a reflector to reflect the incident electromagnetic wave via superimposed passive beamforming.
Alternatively, RIS is also taken as a transmitter which carries additional information via phase modulation schemes. 
In this section, we discuss these essential enabling techniques as well as their corresponding substantial benefits from perspectives of RIS reflection and modulation.
          
\begin{figure}[tb]
\centering\includegraphics[width=1\textwidth]{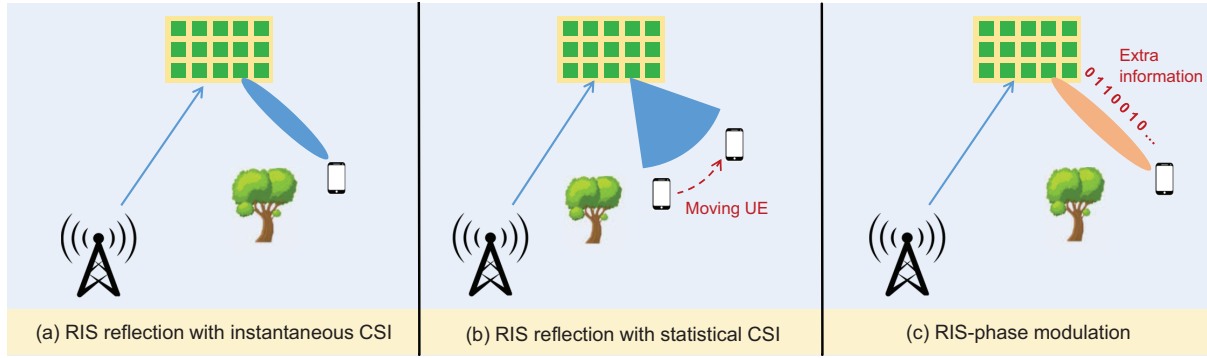}
\caption{RIS reflection and modulation.}
\label{Fig_RIS_Refl_Modu}
\end{figure}

\subsection{RIS reflection with instantaneous CSI}

As a typical reflection feature of RIS, energy efficient passive beamforming has been intensively investigated in existing works \cite{RIS_CE_SR, RIS_EE_Yuen, RIS_Phasenoise_EEVM, RIS_Disc_Hybrid, RIS_Robust_ImCSI_Lett, RIS_Robust_ImCSI, Stat_CSI_Han, Stat_CSI__MultiPair, Stat_CSI_Distributed, Stat_CSI_JBF_MISO, Stat_CSI_JBF, Stat_CSI_2Timescale, Stat_CSI_Coverage_Yang, Stat_CSI_Coverage, Stat_BoardCoverage, CoRIS_ZhaoYQ}.
It helps improve the communications in two ways.
On one hand, RIS reflection can establish virtual LoS paths to expand the cell coverage.
On the other hand, the received signal power can be enhanced via RIS beamforming.
However, the achievable performance realized by RIS-assisted communications, especially with massive MIMO, is usually limited by various hardware impairments, e.g., discrete phase shifts, imperfect channel state information (CSI), low-resolution analog-to-digital converters, and limited RF chains.
These constraints of hardware impairments are critical factors in the design of cost-and-power efficient RIS communication systems in practice.

One of these constraints of the most interest is that RIS is usually equipped with low-resolution phase shifters of a finite number of discrete adjustable phases.
In \cite{RIS_EE_Yuen}, discrete phase shifts for the RIS has been jointly designed with the transmit power for each user to maximize the energy efficiency in a multiuser multiple-input single-output (MISO) system. 
It is revealed that the RIS-assisted communications, even with coarse 1-bit phase shifters, significantly outperform the traditional relay-assisted communications in terms of energy efficiency.
The impacts of both phase errors at the RIS and imperfect RF chains at the base station (BS) have been firstly characterized in \cite{RIS_Phasenoise_EEVM}, which has discovered the essential theoretical observation that the spectral efficiency saturates due to the hardware impairments even when the numbers of elements at both the RIS and BS grow infinitely.
Alternatively, a hybrid beamforming method has been proposed in \cite{RIS_Disc_Hybrid}, considering discrete analog beamforming at the RIS and continuous digital beamforming at the BS for sum rate maximization.

Another popular constraint is that it is difficult to acquire perfect CSI especially in RIS-assisted communication systems. 
This is because RIS is a passive device without the ability of signal processing to transmit or receive RIS-specific pilots.
In order to tackle this challenge, a robust optimization of passive RIS beamforming with imperfect CSI has been developed in \cite{RIS_Robust_ImCSI_Lett}.
Then in \cite{RIS_Robust_ImCSI}, a general framework for the robust transmission design of an RIS-assisted multiuser MISO system has been established. 
Specifically, a transmit power minimization problem has been formulated subject to the worst-case rate constraints with imperfect CSI at the transmitter. 
In addition, a joint channel estimation and signal detection algorithm has been proposed in \cite{RIS_CE_SR} to simultaneously estimate the CSI and the transmitted signals.
    

\subsection{RIS reflection with statistical CSI}
As discussed in the above subsection, it is difficult to acquire perfect CSI in practice. Besides, the pilot and computation overhead to acquire instantaneous CSI of an RIS channel is usually unaffortable in some scenarios, e.g., with high mobility and when the reflecting surface is fabricated to be large. 
In addition, it appears infeasible to frequently adjust the phase shifts with highly dynamic channel fadings, due to limited response bandwidth of the diodes deployed on RIS.
To tackle these challenges, initial efforts have been devoted to studying coverage extension \cite{Stat_CSI_Coverage_Yang, Stat_CSI_Coverage, Stat_BoardCoverage}, reflection optimization\cite{Stat_CSI_Han, Stat_CSI__MultiPair, Stat_CSI_Distributed}, and joint beamforming design 
\cite{Stat_CSI_JBF_MISO, Stat_CSI_JBF, Stat_CSI_2Timescale} for RIS-assisted systems by utilizing statistical CSI as illustrated in Fig.~\ref{Fig_RIS_Refl_Modu}~(b).


Particularly for creating intelligent wireless environments, RIS has been employed to explicitly enhance cell-edge coverage against blockage.
By exploiting statistical CSI, a closed-form expression of the coverage area has been derived in \cite{Stat_CSI_Coverage_Yang} which quantitatively demonstrates that RIS is beneficial to extending the network coverage.
Utilizing the statistical CSI of large-scale channel fading, a closed-form expression of the coverage probability has been derived in \cite{Stat_CSI_Coverage}. 
It implies that the coverage probability scales up with $N^2$ where $N$ is the number of reflecting elements of the RIS.
Moreover, a quasi-static broad coverage has been designed in
\cite{Stat_BoardCoverage} by exploiting statistical CSI.
Considering the overhead of channel estimation, it has been verified that the proposed quasi-static broad coverage even outperforms the RIS reflection designs with instantaneous CSI.

For an RIS-assisted MISO communication system, the RIS reflection with statistical CSI has been optimized to maximize the ergodic spectral efficiency \cite{Stat_CSI_Han}. 
To compensate for the performance loss due to the lack of instantaneous CSI, a balanced choice of quantization resolution of phase shifters has been proposed, leading to a cost-effective system design.
In addition, an RIS-aided multi-pair communication system has been investigated in \cite{Stat_CSI__MultiPair}, under the availability of statistical CSI. An expression of the achievable rate has been derived and a genetic algorithm has been proposed for both continuous and discrete phase shift designs.
Utilizing the statistical channel correlation information and considering imperfect CSI, \cite{Stat_CSI_Distributed} has proposed a valuable RIS reflection method for distributed RISs which is helpful in promoting the use of RIS in practice. Moreover, the theoretical ergodic rate analysis in \cite{Stat_CSI_Distributed} also unveils that the distributed RIS deployment outperforms a conventional centralized architecture with the same number of total reflecting elements.

The absence of instantaneous CSI also brings challenges to the joint optimization of RIS beamforming and BS precoding.
Utilizing statistical CSI in Rician fading, \cite{Stat_CSI_JBF_MISO} has proposed an iterative algorithm to solve the joint optimization problem for an RIS-assisted MISO communication system.
In \cite{Stat_CSI_JBF}, a joint beamforming algorithm has been proposed for an RIS-assisted MIMO system by exploiting only the second-order momentum of the channel statistics. Specifically, the passive beamforming at the RIS and the transmit covariance matrix at the BS are alternatively optimized.
In \cite{Stat_CSI_2Timescale}, a two-timescale beamforming scheme has been presented, in which the passive beamforming at the RIS is first optimized based on the statistical CSI and the active precoding at the BS is then designed with instantaneous CSI.
RIS reflection design with statistical CSI provides an effective alternative to achieve better tradeoff between the overhead of channel estimation and performance for RIS-assisted communication systems especifically under highly dynamic wireless environments.

\subsection{RIS-phase modulation for rate boosting}
\label{RIS-phase modulation}
It has been intensively investigated that RIS reflection is effective in coverage extension and communication reliablity enhancement.
From the perspective of information theory, however, it has been recently discovered that the best way of fully reaping the advantage of RIS is to take it as not only a reflector but also a transmitter. 
In other word, extra information can be transmitted through RIS by, e.g., superimposing modulation on the adjustable phases of reflecting elements, as illustrated in Fig.~\ref{Fig_RIS_Refl_Modu}~(c), called superimposed  RIS-phase modulation \cite{RIS_Modu}.

It has been proved in \cite{RIS_Info} that joint encoding of RIS reflection and BS transmission can be the optimal capacity-achieving approach for an RIS-assisted communication system.
Then in \cite{RIS_PhaseModu_DoF}, it has been further revealed that the degree-of-freedom of RIS-assisted MIMO channel can be significantly improved by techniques like the RIS-phase modulation.
Based on an ON/OFF reflecting modulation, a framework of simultaneous passive beamforming and information transfer (PBIT) for multiuser MIMO communications has been investigated in \cite{RIS_PhaseModu_Onoff}.
Correspondingly, a sample average approximation based iterative algorithm has been developed for joint beamforming design at the transmitter, while a turbo message passing algorithm has been proposed to solve the bilinear estimation problem at the receiver.
Since the reflected signal power fluctuates over time with a various number of reflecting elements switched on, the outage probability of PBIT can be unacceptably high.
As a remedy, an RIS-based reflection pattern modulation (RIS-RPM) scheme has been proposed in \cite{RIS_PhaseModu_Patt}, in which a subset of reflecting elements are switched on to implement efficient beamforming at the cost of conveying less extra information via the RIS.
In these ON/OFF RIS-phase modulation schemes, the reflected power is less than maximum since only a part of reflecting elements are switched on.
To prevent from this excessive power loss, \cite{RIS_PhaseModu_Quad} has proposed a quadrature reflection modulation (RIS-QRM) method where all RIS elements remain active.
Specifically, the reflecting elements are partitioned into two subsets according to the local transmit data at the RIS. The phase shifts of the first subset are designed for RIS reflection, while the second subset are configured with orthogonal phase shifts, i.e., adding $\frac{\pi}{2}$. 

By playing the roles of both reflector and transmitter, RIS can distinctly enhance the transmission rate \cite{RIS_Info, RIS_PhaseModu_DoF}. However, most of the existing joint detectors are based on a maximum likelihood algorithm which is prohibitively complicated. The performance loss caused by other linear detectors can be unacceptable. Therefore, the design of RIS-phase modulation along with low-complexity detection methods is worthy of further study.






\section{Emerging RIS-empowered wireless applications for 6G}
\label{Sec_appl}

\begin{figure}[tb]
\centering\includegraphics[width=1\textwidth]{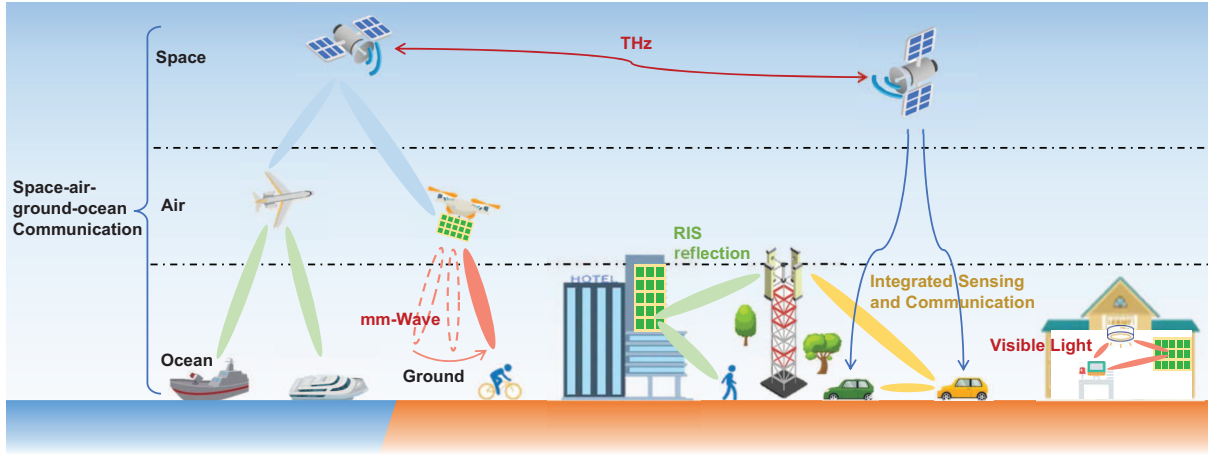}
\caption{Emerging RIS-empowered wireless applications for 6G.}
\label{Application}
\end{figure}

Acting as either a reflector or transmitter, RIS is promisingly beneficial to wireless communications via directional beamforming and phase modulation.
Thanks to these benefits, RISs have great potential in the 6G network for being integrated with emerging applications including high-frequency transmissions, ISAC, and SAGO communications, as illustrated in Fig.~\ref{Application}.

\subsection{RIS-assisted communications in high-frequency band}


Compared to 5G, the 6G network calls for order-of-magnitude increment in communication bandwidth in order to fulfill the KPI in terms of transmission rate. This is only possible  when exploring the high-frequency spectrum, e.g., mm-Wave, THz, and even VL frequency bands.
However, the path loss of high-frequency electromagnetic wave is commonly severe and the dominant LoS connectivity is easily blocked.
By reconstructing the wireless environment, RIS is potentially beneficial to enhancing the communication reliability in high-frequency spectrum.

In \cite{RIS_PLS_SR_mmWave_DAC}, an RIS has been utilized to mitigate the blockage challenge in mm-Wave secure communications. By applying a geometric model for sparse mm-Wave channels, the RIS reflecting and transmit beamforming have been jointly designed via an alternating optimization method. 
Compared to the mm-Wave radio spectrum ranging from 30~GHz to 300~GHz, the THz band provides more abundant bandwidth, i.e., from 0.1~THz to 10~THz.
However, transmission at the THz band suffers from extremely high free-space loss and is extraordinary sensitive to blockage.
For an RIS-assisted THz massive MIMO system, \cite{RIS_Thz_Holo_GaoFF} has proposed two beamforming approaches, i.e., narrow beam steering and spatial bandpass filtering, and a closed-loop channel estimation method.
Considering multiple RISs deployed to assist multiuser MISO communication, a joint design of analog beamformings at the RISs and digital beamforming at the BS has been proposed in \cite{RIS_DRL_MultiHop_THZ}.
In addition to the mm-Wave and THz frequencies, VL communication enjoys the advantage of abundant unlicensed optical spectrum \cite{VLC}.
In \cite{RIS_VLC}, VL communication has been utilized in an RIS-aided dual-hop indoor system.
To be specific, closed-form expressions of the outage probability and bit error rate (BER) have been analyzed and both these performance metrics are significantly improved by exploiting the RIS.
In addition, \cite{RIS_LiFi} has exposed that the integration of RIS with light fidelity technology enables plenty of innovative applications in the 6G network.
From these studies on high-frequency applications, it is concluded that RIS plays an important role in combating the large path loss and vulnerability due to blockage in high-frequency communications.

\subsection{RIS-empowered integrated sensing and communications}




ISAC has attracted great interest from both industry and academia for decades due to its manifold applications for object detection, tracking, positioning, etc \cite{ISAC_Holo, ISAC_ShiWei}. However, the performance of conventional ISAC is still constrained by unfavorable wireless environments, especially in the absence of LoS links. Recently, RIS has been incorporated into ISAC networks to improve the system performance via efficiently reconfiguring the communication environment \cite{RIS_ISAC_Location_Orientation, RIS_ISAC_Location, RIS_ISAC_Song, RIS_ISAC_Lee, RIS_BeamTrain, RIS_Spec_TVT, RIS_Spec_TWC, RIS_Spec_Magz}.

Sensing usually refers to estimate the positions and velocities of the targets via the reflected RF signals from the target terminals.
On one hand, reliable virtual LoS links constructed by RIS certainly improve the positioning accuracy of sensing.
An RIS-assisted positioning and communication architecture has been investigated in \cite{RIS_ISAC_Location_Orientation}.
Compared to a traditional system without RIS, the proposed scheme achieves up to two orders of magnitude reduction in positioning error.
Furthermore, \cite{RIS_ISAC_Location} has characterized the theoretical limits for an RIS-assisted ISAC system based on a general near/far-field signal model. It is revealed that the gain of positioning accuracy benefited from the RIS increases significantly with the carrier frequency.

On the other hand, RIS beamforming can improve the sensing signal-to-noise ratio (SNR) and expand the sensing range \cite{RIS_BeamTrain}.
To maximize the RIS's minimum beampattern gain towards a desired sensing direction, joint optimization of RIS beamforming and BS precoding has been studied in \cite{RIS_ISAC_Song}. 
Under the constrains of transmit power and communication SNR, the problem of sensing design is nonconvex such that an alternating optimization solution has been proposed to optimize the sensing beampattern while guaranteeing the communication SNR requirement.
Taking the sensing receiver design into account, an iterative algorithm has been proposed in \cite{RIS_ISAC_Lee} to improve the communication rate while guaranteeing a minimum sensing SNR.

In addition to the aforementioned sensing techniques, spectrum sensing also plays an important role in RIS-empowered ISAC systems.  
The incident signals towards an RIS are usually mixed with interfering signals, leading to a degraded received signal-to-interference-plus-noise ratio (SINR) at the desired receiver. 
To address this issue, a trained convolutional neural network (CNN) has been exploited at the RIS controller to infer the interfering signals from the incident signals \cite{RIS_Spec_TVT}.
By utilizing federated spectrum learning, the wireless bandwidth allocation, user-RIS association, and phase shift configuration have been jointly optimized in \cite{RIS_Spec_TWC}.
Then in \cite{RIS_Spec_Magz}, an RIS framework with intelligent spectrum sensing has been developed to exploit the inherent characteristics of the RF spectrum for a green 6G network.
All these studies evince enormous potential of RIS in assisting ISAC for 6G communications.

\subsection{Space-air-ground-ocean communications with RIS}
It has been a popular consensus that 6G will provide seamless global coverage, e.g., by the SAGO communication.
For SAGO applications with satellite, unmanned aerial vehicle (UAV), ships, and even underwater vehicles, the battery power of many devices is generally limited and it is inconvenient to frequently replace the built-in  batteries. RIS provides an effective solution for enabling energy efficient SAGO transmissions via passive beamforming. 
In \cite{RIS_Space}, an RIS-assisted SAGO architecture has been proposed where multiple RISs are integrated with mobile edge computing platforms to improve both communication and computing powers.
Besides, an RIS-assisted multiple access framework has been proposed in \cite{RIS_Access} which offers fairness for different types of terminals while reducing the computational complexity of RISs.

In addition, RIS can also extraordinarily enhance the diversity order when adopted to air communications \cite{UAV_Survey}.
In air-RIS systems, an RIS can be deployed on an airplane or a UAV to achieve additional flexibility \cite{RIS_Aeri}. 
As depicted in Fig.~\ref{Application}, this technique adds a dimension of mobility to RIS and improves its ability to establish LoS connectives, especially for terrains lacking infrastructure, e.g., in disaster scenes. 
Furthermore, extra degree of freedom can be provided by the rotation of RIS plane \cite{RIS_DoF}.

One last point coming with this application is that an entirely open environment of SAGO communications faces various unprecedented security risks, including physical, operational, network, and information threats, especially for underwater acoustic communications \cite{SAGIN_Security}.
RIS reflection helps direct a large amount of signal energy to desired terminals and areas, which in fact contributes to communication security.

These aforementioned works \cite{RIS_Space, RIS_Access, RIS_DoF, UAV_Survey, RIS_Aeri, SAGIN_Security} evince that SAGO communication can take the advantages of RIS to enhance energy efficiency, diversity order, and communication security. 

\section{RIS-empowered physical-layer security for 6G}
\label{Sec_PLS}
The development of cellular networks to 6G aims to serve all vertical applications and to connect things all around the world from ground to space. Communication security is therefore of unprecedented importance in the 6G wireless network. 
Different from conventional cryptographic methods implemented at the network and application layers \cite{cryptographic}, physical-layer security ensures secure transmissions from the fundamental perspective of information theory. 
It is completely impervious to the dramatically growing computational ability of advanced adversaries.
In this section, we elaborate on the contributions of RIS to physical-layer security from the perspectives of improving secrecy rate and reducing SOP, verified by a typical case study on RIS-assisted mm-Wave secure communications.  


\begin{table}[!t]
\centering
\footnotesize
\caption{RIS-empowered physical-layer security }
\label{table_PLS}
\tabcolsep 2pt 
\begin{tabular*}{\textwidth}{|c|c|m{5.5cm}|m{6.1cm}|}
 \cline{1-4}
  \textbf{Secure metrics} & \textbf{Reference} & \makecell[c]{\textbf{System setup}} & \makecell[c]{\textbf{Main contributions}} 
  \\\cline{1-4}
  \multirow{3}{*}{\makecell[c]{~\\~\\~\\~\\~\\~\\~\\~\\Secrecy rate}} & \cite{RIS_PLS_SR_MIMO} 
  & \makecell[l]{single RIS,\\ single user MISO, \\ single eavesdropper w. single antenna}
  & Secrecy rate maximization by joint optimization of RIS phase shifts and transmit signal covariance
  \\\cline{2-4} & \cite{RIS_PLS_MultiRIS} 
  & \makecell[l]{multiple RISs,\\ single user MISO, \\ single eavesdropper w. single antenna}
  & Secrecy rate maximization by joint optimization of RIS phase shifts and transmit precoding
  \\\cline{2-4}& \cite{RIS_PLS_SR_AN} 
  & \makecell[l]{single RIS, \\ single user MIMO, \\ single eavesdropper w. multiple antennas}  
  & {Secrecy rate maximization by joint optimization of RIS phase shifts, transmit signal covariance, and AN covariance}
  \\\cline{2-4} & \cite{RIS_PLS_SR_AN_MulUe} 
  & \makecell[l]{multiple RISs, \\ multiuser MISO, \\ multiple eavesdroppers w. multiple \\antennas} 
  & {Secrecy rate maximization by joint optimization of RIS phase shifts, transmit signal covariance, and AN covariance with imperfect CSI}
  \\\cline{2-4} & \cite{RIS_PLS_SR_AN_EveRIS} 
  & \makecell[l]{legitimate \& eavesdropping RISs, \\ single user MIMO, \\ single eavesdropper w. multiple \\antennas} 
  & {Secrecy rate maximization by joint optimization of RIS phase shifts, legitimate combining matrix, transmit precoding, and AN covariance}
  \\\cline{1-4}
  \multirow{3}{*}{\makecell[c]{~\\~\\~\\Secrecy outage\\probability}}  & \cite{RIS_PLS_SOP}
  & \makecell[l]{single RIS, \\ single user SISO, \\ single eavesdropper w. single antenna}
  & Asymptotic SOP analysis
  \\\cline{2-4} & \cite{RIS_PLS_SOP_SR} 
  & \makecell[l]{single RIS, \\ MIMO w. randomly located users, \\ single eavesdropper w. multiple antennas} 
  & SOP analysis in closed form and probability analysis of nonzero secrecy capacity
  \\\cline{2-4} & \cite{RIS_PLS_SOP_AR_PhaseNoise} 
  & \makecell[l]{single RIS, \\single user SISO, \\ multiple cooperative eavesdroppers} 
  & Exact SOP analysis under discrete phase shifters
  \\\cline{1-4}
\end{tabular*}
\end{table}

\subsection{RIS reflection design for enhanced secrecy rate}
Secrecy rate is a fundamental metric of physical-layer security to characterize the ability to prevent private information from eavesdropping \cite{RIS_PLS_SR_Key}.
RIS is capable of enhancing the secrecy rate from multifarious  aspects.
To be specific, RIS significantly improves the received signal power at legitimate terminals while suppresses the leakage of information to potential eavesdroppers via passive beamforming towards desired directions. 

The aim to maximize secrecy rate usually results in a nonconvex optimization problem due to the constraints of unit-modulus RIS phase shifts and the intricate objective of secrecy rate.
In order to find a locally optimal solution to this optimization problem with  coupled variables, alternating algorithms have been developed to optimize one variable with the others fixed \cite{RIS_PLS_MultiRIS, RIS_PLS_SR_MIMO, RIS_PLS_SR_AN, RIS_PLS_SR_AN_MulUe, RIS_PLS_SR_AN_EveRIS}.
Particularly in \cite{RIS_PLS_SR_MIMO}, an alternating algorithm has been proposed to jointly optimize the RIS phase shifts and transmit signal covariance for secrecy rate maximization in an RIS-assisted MIMO system in the presence of a passive eavesdropper equipped with a single antenna. This method is also extendable to a general scenario with a multi-antenna eavesdropper.
Then, multiple RISs have been considered in \cite{RIS_PLS_MultiRIS} where the RIS beamforming and BS precoding have been jointly optimized.
Further in \cite{RIS_PLS_SR_AN}, artificial noise (AN) has been adopted with the RIS to guarantee the secure transmission, where the covariance matrix of AN has also been jointly optimized with the RIS reflection. This additional optimization variable resulted in an extremely complicated nonconvex problem.
To address this issue, the coupled variables were alternately optimized while keeping the secrecy rate non-decreasing.
Moreover, practical concerns of imperfect CSI and scenarios with multiple eavesdroppers have been considered with RISs in \cite{RIS_PLS_SR_AN_MulUe}, where a robust alternating algorithm has been developed by utilizing the optimization techniques of successive convex approximation and semidefinite relaxation.
In addition, an eavesdropping RIS has been taken into account in \cite{RIS_PLS_SR_AN_EveRIS} to assist wiretap. It is revealed that positive secrecy rate is unachievable in the absence of a legitimate RIS, even for eavesdropping RISs equipped with small numbers of elements. 
Most of these schemes have reported significant performance gain in terms of secrecy rate provided by the optimized RIS reflection in a number of scenarios especially when no direct propagations exist.

\subsection{RIS-empowered communications with reduced secrecy outage}

In addition to secrecy rate, SOP, defined as the probability that the instantaneous secrecy rate falls below a target threshold, is an alternative popular metric to characterize the performance of physical-layer security \cite{RIS_PLS_SOP, RIS_PLS_SOP_SR, RIS_PLS_SOP_AR_PhaseNoise}.
Under the assumption of a large number of RIS elements, asymptotic SOP analysis has been conducted in \cite{RIS_PLS_SOP} for a basic RIS-assisted secrecy transmission model consisting of a transmitter, a legitimate user, and an eavesdropper.
In \cite{RIS_PLS_SOP_SR}, comprehensive secrecy performance analysis, including both SOP and average secrecy rate, has been presented by comparing the RIS-assisted communication to a conventional MIMO system without RIS.
Specifically, an exact expression of SOP has been derived under the assumption of randomly located users by applying stochastic geometry.
It is revealed that the secrecy performance in terms of SOP can be significantly improved with an increasing number of RIS elements.
Moreover, a closed-form expression of SOP with phase quantization errors has been derived in \cite{RIS_PLS_SOP_AR_PhaseNoise}, considering that it is infeasible to realize ideal continuous phase shifts at the RIS in practice.


In Table \ref{table_PLS}, we summarize the studies on RIS acting as a reflector to enhance secrecy rate and improve secrecy outage performance. As discussed in Section \ref{RIS-phase modulation}, RIS can also be utilized as a transmitter carrying extra information via phase modulation. This functionality also contributes to secure communications. Straightforwardly, more private information can be transmitted via RIS-phase modulation to desired user-equipments (UEs). On the other hand, AN can also be generated via RIS-phase modulation to help jamming the eavesdropper.
Therefore, it is still an interesting problem to investigate modulation design of RIS for secure communications in future wireless networks. 

\subsection{A case study on RIS-empowered communications with secure contour analysis}

In this section, we present a case study on secure mm-Wave communications to explicitly and quantitatively elaborate on the benefits of deploying an RIS for secure communications.
Specifically, we study a typical three-terminal wiretap channel where a multi-antenna transmitter (Alice) sends confidential signals to a desired receiver (Bob) in the presence of an eavesdropper (Eve).
Under the assumption that the LoS path between Alice and Bob/Eve is blocked, an RIS is exploited to assist the secure communication as displayed in Fig. \ref{PLS_Scenario}.


We assume that the RIS is equipped with $N$ reflecting elements while Alice exploits $M$ antennas. Bob and Eve are single-antenna receivers.
Due to the spatial sparsity of mm-Wave MIMO channels, assume that all the channels in this system are dominated by LoS paths. 
The Alice-RIS, RIS-Bob, and RIS-Eve channels are, respectively, denoted by $\mathbf{G}=l_A^{\frac{1}{2}}\mathbf{g}_R \mathbf{g}_A^H$, $\mathbf{h}_B=l_B^{\frac{1}{2}}\mathbf{g}_B$, and $\mathbf{h}_E=l_E^{\frac{1}{2}}\mathbf{g}_E$,
where $\mathbf{g}_A \in \mathbb{C}^{M\times 1}$, $\mathbf{g}_R \in \mathbb{C}^{N\times 1}$, $\mathbf{g}_B \in \mathbb{C}^{N\times 1}$, and $\mathbf{g}_E \in \mathbb{C}^{N\times 1}$ are signature response vectors
\footnote{
Specifically, $\mathbf{g}_A=\left[1,e^{-j2\pi\frac{d}{\lambda}\cos\phi},...,e^{-j2\pi(M-1)\frac{d}{\lambda}\cos\phi}\right]^T$ where $\phi\sim\mathcal{U}(0,\pi)$ is the angle of departure (AoD) of the transmit signal at Alice, $d$ denotes the distance between adjacent antennas, and $\lambda$ denotes the wavelength of the carrier frequency.
Notations $\mathbf{g}_R$, $\mathbf{g}_B$, and $\mathbf{g}_E$ are similarly defined with $\varphi\sim\mathcal{U}(0,\pi)$ denoting the angle of arrival (AoA) at the RIS, $\psi_B\sim\mathcal{U}(0,\pi)$ denoting the AoD of the reflected signals from RIS to Bob, and $\psi_E\sim\mathcal{U}(0,\pi)$ denoting the AoD of the reflected signals from RIS to Eve, respectively.
},
and $l_A$, $l_B$, and $l_E$ are large-scale fadings.

To transmit signal $s$ with a precoder vector $\mathbf{w}\in\mathbb{C}^{M\times 1}$ at Alice, assuming $\mathbb{E}\left\{|s|^2\right\}=1$ and $\|\mathbf{w}\|^2=1$, the received signal at Bob and Eve are respectively
\begin{align}
y_B= \mathbf{h}_B^H\mathbf{\Theta}\mathbf{G} \sqrt{P}\mathbf{w}s+n_B,
\label{y_B}
\end{align}
and
\begin{align}
y_E= \mathbf{h}_E^H\mathbf{\Theta}\mathbf{G} \sqrt{P}\mathbf{w}s+n_E,
\label{y_E}
\end{align}
where $n_B\sim \mathcal{CN}(0,\sigma_n^2)$ and $n_E\sim \mathcal{CN}(0,\sigma_n^2)$ denote the thermal noises at Bob and Eve, respectively, $\sigma_n^2$ is the noise power, and $\mathbf{\Theta}=\mathrm{diag} \{e^{j\theta_1},e^{j\theta_2},...,e^{j\theta_N}\}$ denotes the phase shift matrix at the RIS with $\theta_i$ being the phase shift of the $i$th reflecting element.

\begin{figure}[tb]
\centering\includegraphics[width=0.46\textwidth]{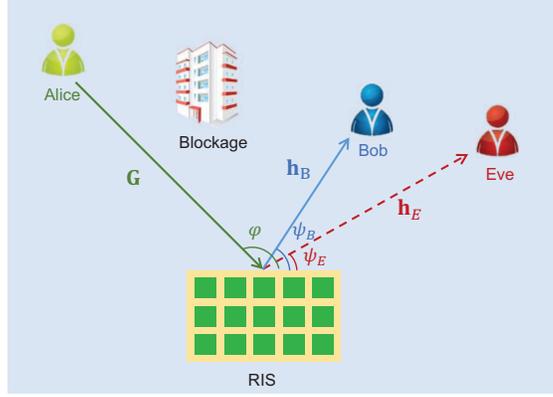}
\caption{A secure mm-Wave communication system assisted with an RIS.}
\label{PLS_Scenario}
\end{figure}

\subsubsection{ Theoretical secrecy rate analysis }

Consider the secrecy capacity defined in \cite{Sec_broadcast} as
\begin{align}
\label{Cs}
C_S=\max\limits_{s \rightarrow \mathbf{w}s \rightarrow y_B, y_E} I\Big(s;y_B\Big) - I\Big(s;y_E\Big),
\end{align}
where $I(\cdot;\cdot)$ denotes the mutual information between two random variables.
The secrecy capacity $C_S$ is characterized by maximizing over all joint distributions such that a Markov chain $s \rightarrow \mathbf{w}s \rightarrow \{y_B, y_E\}$ is formed.
Adopting a maximal-ratio-transmission (MRT) precoder at Alice, i.e., $\mathbf{w}=\frac{\left(\mathbf{h}_B^H\mathbf{\Theta}\mathbf{G}\right)^H}{\left\|\mathbf{h}_B^H\mathbf{\Theta}\mathbf{G}\right\|}$, the achievable secrecy rate is obtained as
\begin{align}
R_S&=\left[\log_2\left(1+\gamma_B\right)-\log_2\left(1+\gamma_E\right)\right]^+
,
\label{Rs}
\end{align}
where 
\begin{align}
\gamma_B
&=\frac{l_Al_BMP}{\sigma_n^2}\left|\mathbf{g}_B^H\mathbf{\Theta}\mathbf{g}_R \right|^2,
\label{SNR_B}
\end{align}
and
\begin{align}
\gamma_E
&=\frac{l_Al_E MP}{\sigma_n^2}\left| \mathbf{g}_E^H\mathbf{\Theta}\mathbf{g}_R\right|^2
,
\label{SNR_E}
\end{align}
are the received SNRs at Bob and Eve, respectively.

Comparing \eqref{SNR_B} with \eqref{SNR_E}, the differences between the desired SNR $\gamma_B$ and leakage SNR $\gamma_E$ come from two aspects. One is the large-scale fadings $l_B$ and $l_E$, which depend on the distances between the RIS to Bob and Eve.
The other is the beamforming gains $\left|\mathbf{g}_B^H\mathbf{\Theta}\mathbf{g}_R \right|$ and $\left| \mathbf{g}_E^H\mathbf{\Theta}\mathbf{g}_R\right|$, which relate to the cascade channels and RIS phase shifts.
To guarantee the secure communication, we analyze the secrecy performance with the optimal RIS phase shifts to improve $\gamma_B$ while suppress $\gamma_E$.


Without the knowledge of the presence of Eve, $\mathbf{\Theta}$ is optimized by maximizing the received SNR of Bob as in a conventional RIS communication system.
The optimal phase shifts at the RIS to maximize $\gamma_B$ in \eqref{SNR_B} is therefore
$\mathbf{\Theta}^*=\mathrm{diag}\{1,e^{j\theta},...,e^{j(N-1)\theta}\}$,
where
$\theta=2\pi\frac{d}{\lambda}(\cos\varphi-\cos\psi_B)$.
Substituting $\mathbf{\Theta}^*$ into \eqref{Rs}, the achievable secrecy rate is evaluated by
\begin{align}
R_S=\left[\log_2\left(1+\frac{l_Al_B MN^2P}{\sigma_n^2}\right)-\log_2\left(1+\frac{l_Al_E MP}{\sigma_n^2}\left| \mathbf{g}_E^H\mathbf{g}_B \right|^2\right)\right]^+
\label{Rs_opt}.
\end{align}

In \eqref{Rs_opt}, the first term is the achievable rate of Bob given the optimal RIS beamforming while the second term is the eavesdropping rate of Eve.
It implies that the transmit power can be scaled down by $\frac{1}{N^2}$ while maintaining a fixed desired rate at Bob.
For the leakage rate, we have $\left| \mathbf{g}_E^H\mathbf{g}_B \right|<N$ since Eve in general lies in a different direction from Bob. 
A positive secrecy rate is obtained to establish secure communications in this case. 
Two insightful observations are concluded in the following.

\begin{observation}
\label{Rs_Theorem_LowB}
Under the assumptions of $l_B\geq l_E$ and  $N\rightarrow\infty$, a tight lower bound for the ergodic achievable secrecy rate $R_S$ in \eqref{Rs_opt} is obtained as
\begin{align}
\underline{R}_S=\log_2\left(1+\frac{l_Al_B MN^2P}{\sigma_n^2}\right)-\log_2\left(1+\frac{l_Al_E MP}{\sigma_n^2}\eta\right),
\label{Rs_bound}
\end{align}
where we define
$\eta\triangleq N-\frac{2}{\pi^2}(N-1)+\frac{2N}{\pi^2}(\ln N +a)$
and $a$ is the Euler's constant.

\end{observation}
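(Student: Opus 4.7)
The plan is to strip off the positive-part operator using the assumption $l_B\geq l_E$, apply Jensen's inequality to bound the expected leakage term, and then reduce the remaining second-moment calculation to a Dirichlet-kernel-style sum whose asymptotics are controlled by the Bessel function $J_0$ and the harmonic number $\sum_{k=1}^{N-1}1/k$.

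First I would verify that the $[\cdot]^+$ in \eqref{Rs_opt} is inactive under the stated hypotheses. Cauchy--Schwarz gives $|\mathbf{g}_E^H\mathbf{g}_B|\leq\|\mathbf{g}_E\|\|\mathbf{g}_B\|=N$, so $\gamma_E\leq l_Al_EMN^2P/\sigma_n^2$. Combined with $l_B\geq l_E$, this forces $\gamma_B\geq\gamma_E$ almost surely, so $R_S=\log_2(1+\gamma_B)-\log_2(1+\gamma_E)$ unconditionally. The first term is deterministic once the Alice--RIS--Bob geometry is fixed, and only the leakage term carries randomness through $\mathbf{g}_E$.

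Next I would apply Jensen's inequality to that leakage term: concavity of $\log_2(1+\cdot)$ yields
\begin{equation*}
\mathbb{E}\{\log_2(1+\gamma_E)\}\;\leq\;\log_2\!\Big(1+\tfrac{l_Al_EMP}{\sigma_n^2}\,\mathbb{E}\{|\mathbf{g}_E^H\mathbf{g}_B|^2\}\Big),
\end{equation*}
so it suffices to show $\mathbb{E}\{|\mathbf{g}_E^H\mathbf{g}_B|^2\}\sim\eta$ as $N\to\infty$. Using the identity $|\mathbf{g}_E^H\mathbf{g}_B|^2=N+2\sum_{k=1}^{N-1}(N-k)\cos(k\alpha)$ with $\alpha=\delta(\cos\psi_E-\cos\psi_B)$ and $\delta=2\pi d/\lambda$, and invoking the independence and symmetry of $\psi_B,\psi_E\sim\mathcal{U}(0,\pi)$, the cross-term expectations factor as $\mathbb{E}\{\cos(k\alpha)\}=|\mathbb{E}\{e^{jk\delta\cos\psi}\}|^2=J_0(k\delta)^2$, where I would use the Bessel integral $J_0(x)=\tfrac{1}{\pi}\int_0^{\pi}\cos(x\cos\theta)\,d\theta$ together with the fact that $\mathbb{E}\{\sin(k\delta\cos\psi)\}=0$ by the symmetry of $\cos\psi$ about the origin.

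The final step is the asymptotic evaluation. At the standard half-wavelength spacing $d=\lambda/2$ (so $\delta=\pi$), the large-argument expansion $J_0(x)^2=\tfrac{1}{\pi x}(1+\sin 2x)+O(x^{-2})$ reduces to $J_0(k\pi)^2\sim 1/(\pi^2 k)$ because $\sin(2k\pi)\equiv 0$. Substituting into the sum and splitting $\sum_{k=1}^{N-1}(N-k)/k=N\sum_{k=1}^{N-1}1/k-(N-1)$, the harmonic expansion $\sum_{k=1}^{N-1}1/k=\ln N+a+O(1/N)$ produces $\mathbb{E}\{|\mathbf{g}_E^H\mathbf{g}_B|^2\}\sim N+\tfrac{2N}{\pi^2}(\ln N+a)-\tfrac{2(N-1)}{\pi^2}=\eta$. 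Combining with the Jensen step gives the claimed $\underline{R}_S$.

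The main obstacle will be making the Bessel asymptotics quantitatively rigorous inside the sum. The leading-order replacement $J_0(k\pi)^2\mapsto 1/(\pi^2 k)$ is accurate only for large $k$, so one must check that the $O(k^{-2})$ remainder is summable with bounded error $O(N)$ (hence absorbed into lower-order terms) and that the small-$k$ contributions where the asymptotic fails only perturb $\eta$ at order $o(N\ln N)$. The cancellation $\sin(2k\pi)=0$ is what keeps the derivation clean at half-wavelength spacing; at any other antenna spacing one would have to argue separately that the oscillatory contributions sum to a lower-order term, which is why the statement is naturally phrased in the large-$N$ limit.
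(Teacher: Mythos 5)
Your proposal is correct and follows essentially the same route as the paper's proof in Appendix A: drop the positive part under $l_B\geq l_E$, apply Jensen's inequality to the leakage term, reduce $\mathbb{E}\{|\mathbf{g}_E^H\mathbf{g}_B|^2\}$ to $N+2\sum_{n=1}^{N-1}(N-n)J_0^2(n\pi)$ via the Bessel integral for uniformly distributed angles, and then use $J_0^2(n\pi)\rightarrow 1/(\pi^2 n)$ together with the harmonic-sum/Euler-constant expansion to arrive at $\eta$. Your explicit Cauchy--Schwarz argument that the $[\cdot]^+$ is inactive and your remark on controlling the $O(k^{-2})$ remainder of the Bessel asymptotics are details the paper leaves implicit, but they do not alter the argument.
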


The proof of this observation is given in \ref{app_lemma_eta}.
From \eqref{Rs_bound}, the desired SNR at Bob grows in the order of $N^2$ while the leakage SNR at Eve increases in the order of $N\ln N$. 
Therefore, an increasing number of reflecting elements is beneficial to the ergodic secrecy rate. 
The quantitative scaling law is further characterized in the following observation.

\begin{observation}
For large $N\rightarrow\infty$ and $M\rightarrow\infty$, the lower bound for the ergodic achievable secrecy rate $\underline{R}_S$ in \eqref{Rs_bound} converges to
\begin{align}
\underline{R}_S\rightarrow\log_2\left(\frac{ N^2 l_B}{ \eta l_E}\right)
\rightarrow\log_2\left(\frac{ N \pi^2 l_B}{ 2 l_E}\right)
,
\label{Rs_bound_app}
\end{align}
where we apply $\log(1+x)\rightarrow\log(x)$ for large $x\rightarrow\infty$ and use the definition of $\eta$.
\end{observation}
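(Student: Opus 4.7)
The plan is to chain two elementary asymptotic simplifications applied to the lower bound $\underline{R}_S$ in \eqref{Rs_bound}; no new probabilistic content is needed beyond what Observation~\ref{Rs_Theorem_LowB} already supplies.

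First I would rewrite $\underline{R}_S=\log_2(1+A)-\log_2(1+B)$ with $A\triangleq l_A l_B M N^2 P/\sigma_n^2$ and $B\triangleq l_A l_E M P\eta/\sigma_n^2$. Since $M\to\infty$ and $N\to\infty$ drive both $A$ and $B$ to infinity, the elementary asymptotic $\log_2(1+x)=\log_2 x+o(1)$ applies termwise; then $\log_2 A-\log_2 B=\log_2(A/B)$ lets the common prefactor $l_A M P/\sigma_n^2$ cancel, leaving $\log_2\!\bigl(N^2 l_B/(\eta l_E)\bigr)$. Continuity of $\log_2$ on $(0,\infty)$ justifies pulling the limit inside.

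Second, I would substitute the explicit form $\eta=N-\tfrac{2}{\pi^2}(N-1)+\tfrac{2N}{\pi^2}(\ln N+a)$, factor $N$ out of the three summands, and retain only the dominant contribution as $N\to\infty$ so that $N^2/\eta$ collapses to the announced $N\pi^2/2$. Applying $\log_2$ one more time, again by continuity, then yields the final form $\log_2\!\bigl(N\pi^2 l_B/(2 l_E)\bigr)$.

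The main obstacle will be precisely this second step: the three summands of $\eta$ live on different orders in $N$, so a careful dominant-balance argument is required both to identify which contribution dictates the leading behaviour and to reconcile the resulting scaling with the claimed ratio $N\pi^2/2$; the subleading remainders must vanish inside $\log_2$ after the cancellation. Everything else is a routine continuous-mapping argument combined with the identity $\log_2(1+x)\sim\log_2 x$, and no auxiliary lemmas beyond those already supporting \eqref{Rs_bound} are needed.
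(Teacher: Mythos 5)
Your first step coincides with the paper's own (very terse) argument: apply $\log_2(1+x)\rightarrow\log_2(x)$ to both terms of \eqref{Rs_bound}, cancel the common prefactor $l_Al_EMP/\sigma_n^2$ against $l_Al_BMP/\sigma_n^2$ up to the ratio $l_B/l_E$, and obtain the first arrow $\underline{R}_S\rightarrow\log_2\bigl(N^2l_B/(\eta l_E)\bigr)$. Nothing more is needed there, and continuity of $\log_2$ is indeed all the justification required.

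The gap is in your second step, and it is exactly the obstacle you flag but do not resolve. With $\eta= N-\frac{2}{\pi^2}(N-1)+\frac{2N}{\pi^2}(\ln N +a)$, the dominant summand as $N\rightarrow\infty$ is $\frac{2N}{\pi^2}\ln N$, which is of order $N\ln N$, not $N$ (the paper itself says, right after Observation~\ref{Rs_Theorem_LowB}, that the leakage SNR grows as $N\ln N$). Retaining only the dominant contribution therefore gives $N^2/\eta\sim \pi^2 N/(2\ln N)$, \emph{not} the announced $\pi^2 N/2$: no dominant-balance argument can ``collapse'' $N^2/\eta$ to $N\pi^2/2$, because that ratio corresponds to replacing $\eta$ by $\frac{2N}{\pi^2}$, i.e., to discarding the factor $(\ln N+a)$ entirely. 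Inside $\log_2$ the discrepancy is an additive $\log_2(\ln N+a)$ term, which is unbounded (though $o(\log_2 N)$), so the second arrow of \eqref{Rs_bound_app} holds only as a leading-order-in-$\log_2 N$ scaling statement, not as a limit of vanishing difference. To make your proof sound you should either state the limit as $\log_2\bigl(\pi^2 N l_B/(2 l_E\ln N)\bigr)$ and remark that the paper's form agrees up to an $O(\log_2\ln N)$ correction that is negligible relative to the $\log_2 N$ growth, or explicitly declare that the final simplification drops the slowly varying factor $\ln N+a$; as written, the claim that the dominant contribution yields $N\pi^2/2$ is false.
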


This observation states that the ergodic achievable secrecy rate increases with $N$ while saturates for large $M$.
This is because $N$ reflecting elements at RIS provide a diversity gain in the order of $N^2$ for Bob while it is not identically beneficial to Eve.
In contrast, for increasing the number of antennas at Alice, the transmit beamforming gain is equally improved for both Bob and Eve.
It is therefore verified more efficient to increase the number of reflecting elements at RIS, which is also more energy efficient, than to increase the number of antennas at Alice.

\subsubsection{Numerical simulation}

For numerical simulations, we set the transmit power as $P=20$~W, the noise spectral density as $-174$~dBm/Hz, and the bandwidth as $100$~M. 
We exploit a standard linear model of the free space path loss of $l$ for mm-Wave communications as \cite{mmWave_pathloss}
\begin{align}
\label{pl}
-10\log_{10}(l)=\alpha+\beta10\log_{10}(d),
\end{align}
where $d$ is the distance in meters, and $\alpha$ and $\beta$ are respectively the least square fits of floating intercept and slope over the measured distances.
In particular, we use the values $\alpha=61.4$ and $\beta=2$ for $28$~GHz \cite{mmWave_pathloss}.


\begin{figure}[tb]
\centering
\subfloat[$N=8$, $d_A=10$, and $d_E=30$.]{
\begin{minipage}[t]{0.48\linewidth}
\centering
\includegraphics[width=1.1\linewidth]{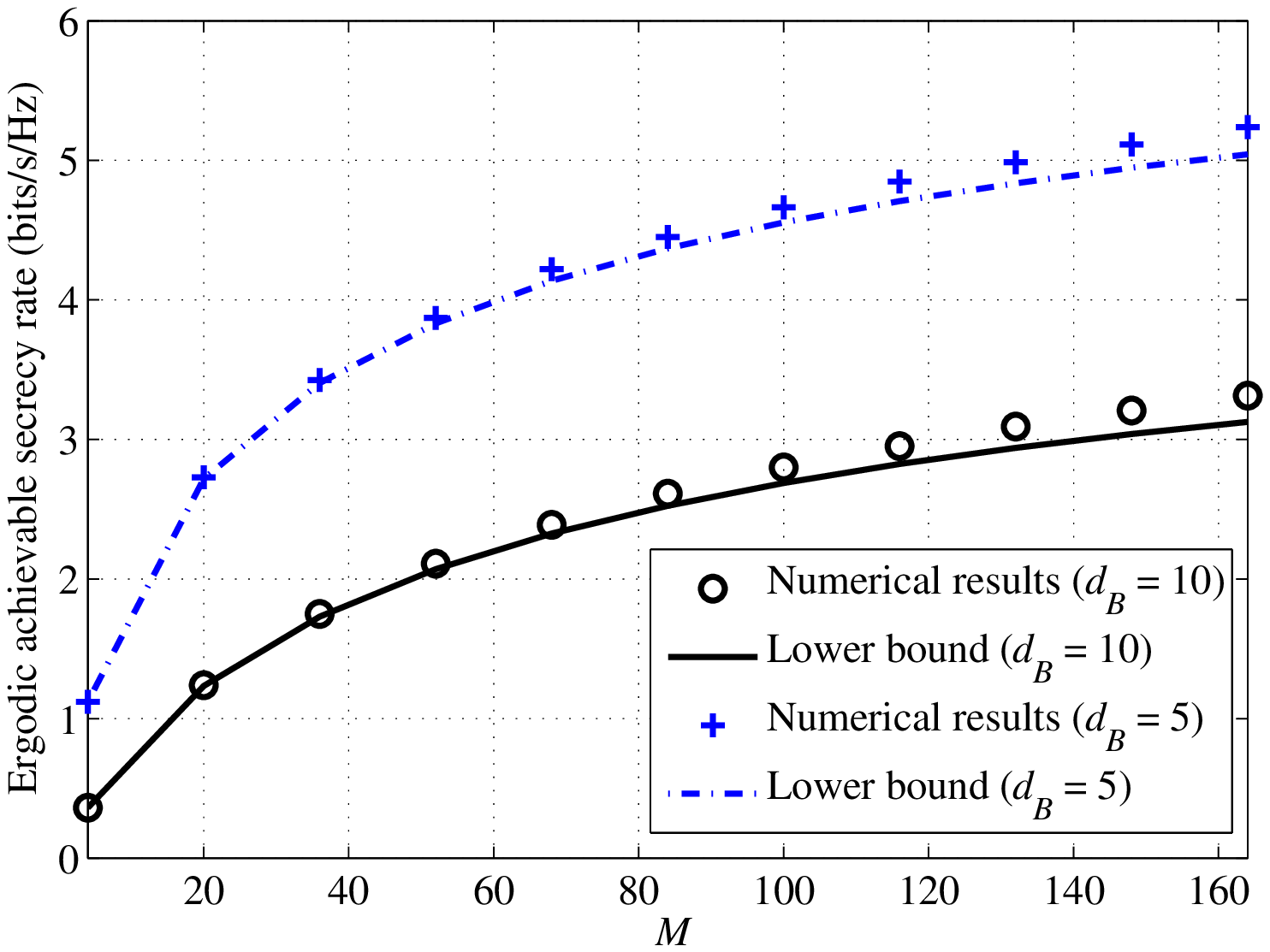}
\end{minipage}}
\subfloat[$d_A=15$, $d_B=20$, and $d_E=30$.]{
\begin{minipage}[t]{0.48\linewidth}
\centering
\includegraphics[width=1.1\linewidth]{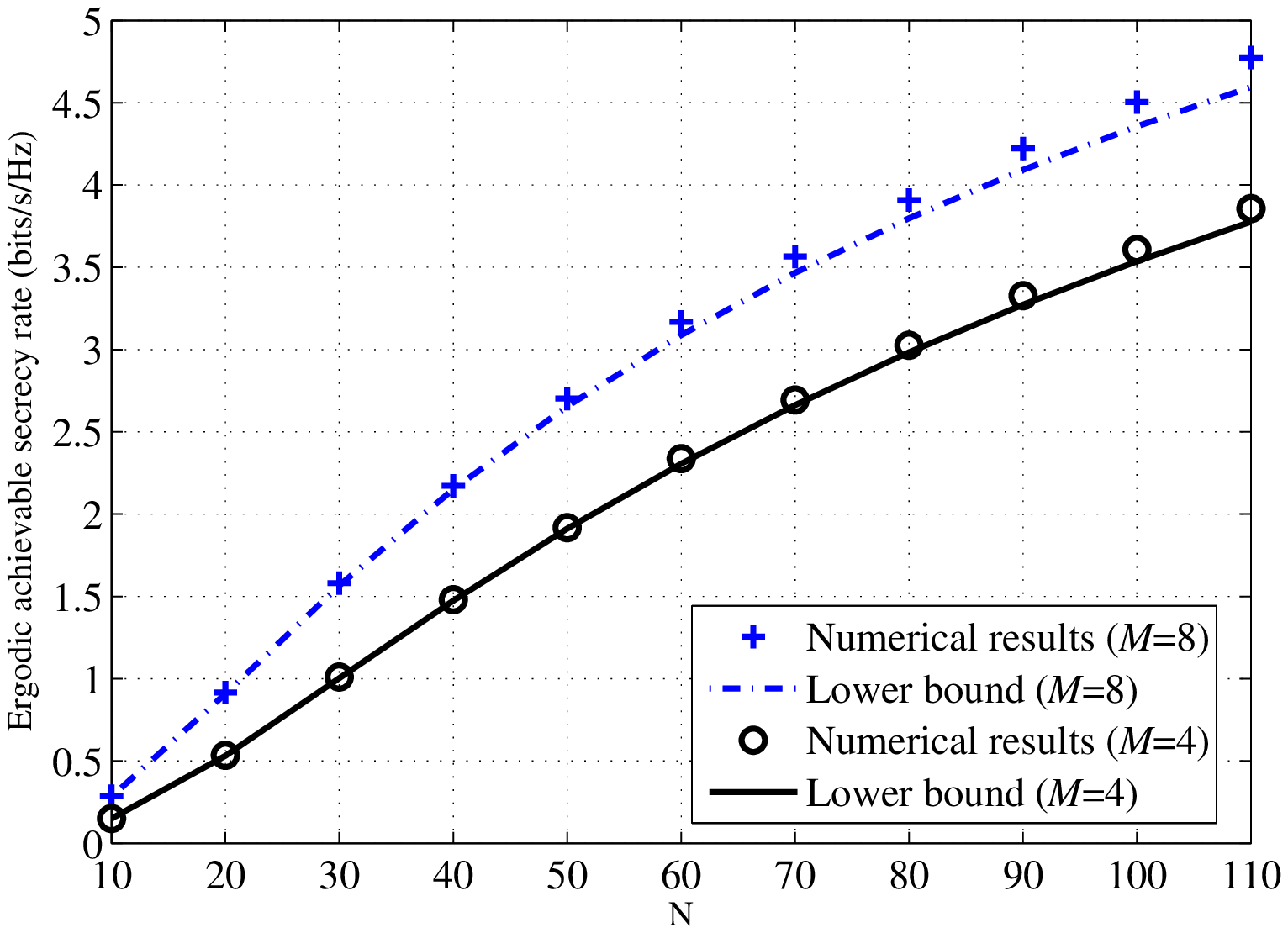}
\end{minipage}}
\caption{Ergodic achievable secrecy rate versus $M$ and $N$ ($\phi=\frac{\pi}{4}$ and $\varphi=\frac{3\pi}{4}$).}
\label{Ergodic_Rs}
\end{figure}

Fig.~\ref{Ergodic_Rs} verifies the tightness of the derived lower bound in \eqref{Rs_bound}. $l_A$, $l_B$, and $l_E$ respectively denotes the distances between the RIS and Alice, Bob, and Eve.
The secrecy rate increases with $M$ and finally tends saturated as shown in Fig.~\ref{Ergodic_Rs}(a).
A large number of transmit antennas provide more degrees of freedom at Alice, which is equally beneficial to both Bob and Eve. Its bonus to the secrecy rate is thus limited.
On the other hand, the secrecy rate monotonically increases with $N$ in Fig.~\ref{Ergodic_Rs}(b).
More efficient beamforming toward Bob can be realized by a larger number of reflecting elements at RIS, making it more challenging for Eve to wiretap.

Consider the particular scenario as illustrated in Fig.~\ref{PLS_Scenario}, where Bob and Eve are located with the reflected AoDs $\psi_B\in(0,\frac{\pi}{2})$ and $\psi_E\in(0,\frac{\pi}{2})$, respectively.
Given that Eve is fixed at a position, we define a secrecy area of Bob where the secrecy rate satisfies $R_S\geq R_0$ for a threshold secrecy rate $R_0\geq 0$.
The secrecy areas of Bob are illustrated in Fig.~\ref{area} given a fixed Eve marked by a pentagram.
The wheat, green, and black lines are contours of the ergodic secrecy rate corresponding to $R_0=1,2$, and $4$ bps/Hz, respectively.
In order to guarantee the target secrecy rate, e.g., $R_0=1$ bps/Hz illustrated by the wheat line in Fig.~\ref{area} (b), Bob is supposed to be located within 17.3 meters to the coordinate origin in the same direction of Eve, i.e., $\psi_B=\psi_E=\frac{\pi}{4}$.
This distance constraint can be relaxed to about 30 meters when Bob lies in different directions from Eve, i.e., $|\psi_B-\psi_E|>\frac{\pi}{12}$.
By comparing the subfigures in Fig.~\ref{area}, we observe that the secrecy area can be significantly expanded by increasing the number of RIS reflecting elements.

\begin{figure*}[t]
\centering
\subfloat[$M=1024$, $N=2$.]{
\begin{minipage}[t]{0.25\linewidth}
\centering
\includegraphics[width=1\linewidth]{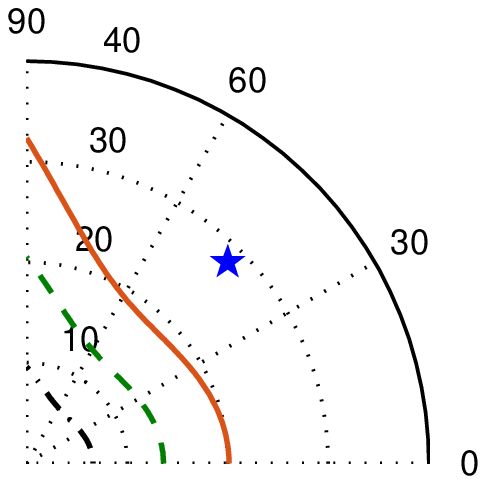}
\end{minipage}}
\subfloat[$M=32$, $N=8$.]{
\begin{minipage}[t]{0.25\linewidth}
\centering
\includegraphics[width=1\linewidth]{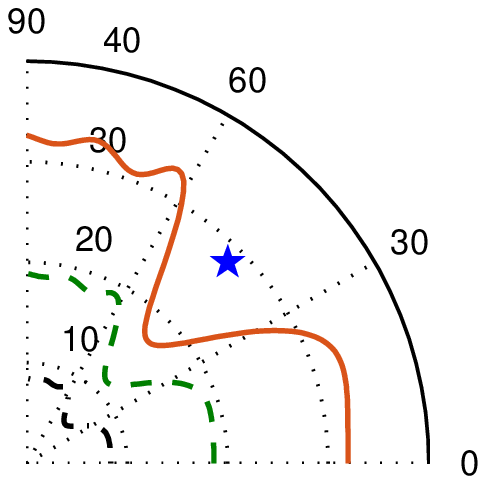}
\end{minipage}}
\subfloat[$M=32$, $N=16$.]{
\begin{minipage}[t]{0.25\linewidth}
\centering
\includegraphics[width=1\linewidth]{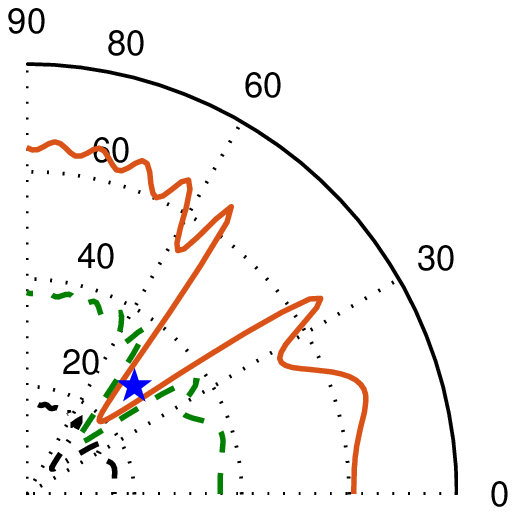}
\end{minipage}}
\subfloat[$M=16$, $N=32$.]{
\begin{minipage}[t]{0.25\linewidth}
\centering
\includegraphics[width=1\linewidth]{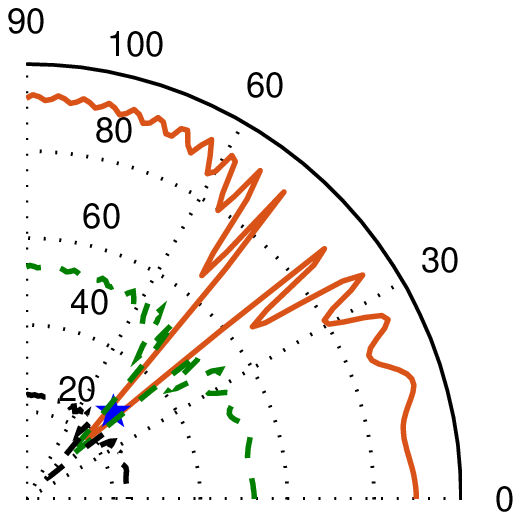}
\end{minipage}}
\caption{Secrecy area ($d_A=5\sqrt{2}$, $d_E=20\sqrt{2}$, $\psi_E=\frac{\pi}{4}$, $\phi=\frac{\pi}{4}$, and $\varphi=\frac{3\pi}{4}$).}
\label{area}
\end{figure*}

\section{Research challenges and opportunities}
\label{Sec_chal}
Emerging as a promising technology for 6G, the deployment of RIS in practice still faces a multitude of challenges and opportunities.
In this section, we discuss a few challenging issues and potential topics for future research.

\subsection{RIS channel estimation}

Channel estimation is an essential prerequisite in RIS-assisted communication systems. In the following, we overview the state-of-the-art estimation schemes for two different RIS structures, i.e., passive and active RISs.

\subsubsection{Channel estimation for passive RIS}
Typically, RIS is a passive device with no RF chains to actively transmit, receive, and process pilot sequences. It is therefore infeasible to conduct separate channel estimations for the BS-RIS and RIS-UE channels \cite{CE_SemiBlind, CE_OFDM}. 
In addition, the large size of reflecting array, resulting in a large number of channel coefficients to estimate, also makes the channel estimation extremely challenging. 

To estimate the cascaded BS-RIS-UE channel, a straightforward estimation framework is the ON/OFF method, in which the intricate estimation process is divided into multiple sub-phases and only a part of the RIS reflecting elements are switched on in each sub-phase.
An intuitive method is to switch on the $N$ reflecting elements one-by-one and estimate each cascaded channel coefficient in the one-by-one manner \cite{CE_Onoff}. Since an additional sub-phase is needed to estimate the direct BS-UE channel with all the elements off, totally $N+1$ sub-phases are required.
To reduce this huge overhead, this ON/OFF scheme has been modified in \cite{CE_Block} by adopting an element grouping method, where a sub-surface consisting of neighboring elements is turned on in each estimation sub-phase. 
Further in \cite{CE_DFT_MMSE}, the entire RIS is switched on in each sub-phase by applying discrete Fourier transform sequences.
Moreover, a three-phase channel estimation framework has been proposed in \cite{CE_3Phases_Correlation} by exploiting the correlations among the UE-RIS-BS cascade channels of different users.
On one hand, these ON/OFF estimation methods impose huge pilot overhead due to the multiple training sub-phases and prohibitive power consumption since the reflecting elements are switched on/off frequently. 
On the other hand, it is difficult to ensure the channel estimation accuracy because only a part of the RIS is switched on in each sub-phase while all RIS elements are used for data transmissions.

To achieve separate channel estimation, an alternative RIS channel estimation framework based on matrix decomposition has recently gained  attention.
In \cite{CE_Fact}, a bilinear generalized approximate message passing (BiG-AMP) algorithm has been adopted for sparse channel matrix decomposition.
Further in \cite{CE_Activity}, the channel estimation and the RIS-related activity detection problems have been jointly solved by the BiG-AMP, singular value thresholding, and vector AMP algorithms.
In addition, parallel factor decomposition has been considered as an efficient method for estimating RIS-assisted massive MIMO channels \cite{CE_Parafac, CE_Parafac_AMP}. This algorithm reduces the complexity of low-rank matrix estimations via decomposing a high dimensional matrix into a linear combination of multiple rank-one tensors.
Note that the efficiency of matrix decomposition relies heavily on the sparsity of the RIS-assisted MIMO channels which restricts its applications to general scenarios except for high-frequency communications.

\begin{table}[tb]
\footnotesize
\centering
\caption{RIS channel estimation}
\label{table_CE}
\tabcolsep 2pt 
\begin{tabular*}{\textwidth}{|c|c|c|m{4.6cm}|m{4.6cm}|}
 \cline{1-5}
  \textbf{\makecell[c]{RIS \\structure}} 
  & \textbf{\makecell[c]{Channel estimation\\ framework}} 
  & \textbf{Reference} & \makecell[c]{\textbf{Pros}} 
  & \makecell[c]{\textbf{Cons}} 
  \\\cline{1-5}
  \multirow{2}{*}{\makecell[c]{~\\Passive RIS}} & ON/OFF & \cite{CE_Onoff,CE_Block,CE_DFT_MMSE,CE_3Phases_Correlation} 
  & Low computational complexity
  & Excessive pilot overhead and prohibitive power consumption 
  \\\cline{2-5}
  & Matrix decomposition & \cite{CE_Fact,CE_Activity,CE_Parafac, CE_Parafac_AMP} 
  & Low pilot overhead
  & Only for sparse RIS channels
  \\\cline{1-5}
  \multirow{2}{*}{\makecell[c]{~\\Active RIS}} & Hybrid RIS & \cite{CE_active, CE_Hybrid_Sensing, CE_Hybrid_Sensing2}
  & Reduced overhead for channel estimation 
  & Additional hardware cost
  \\\cline{2-5}
  & Deep learning & \cite{RIS_CS_DL} 
  & Integrated RIS reflection design with channel estimation
  & Additional hardware cost
  \\\cline{1-5} 
\end{tabular*}
\end{table}

\subsubsection{Channel estimation for active RIS}

The passive RIS with no RF chains introduces negligible power consumption.
However, as aforementioned, it is difficult for this passive architecture to obtain accurate CSI.
Recently, low-power active sensors have been successfully installed at RIS, named active (or semi-passive) RIS, to trade off between the channel estimation performance and hardware/energy cost.

For an RIS with a smaller number of active sensors than that of the reflecting elements, two different active RIS architectures have been studied in \cite{CE_active, CE_Hybrid_Sensing, CE_Hybrid_Sensing2, RIS_CS_DL}.
One is similar to the structure of hybrid massive MIMO \cite{MassMIMO_Relay_Hybrid}, called hybrid RIS \cite{CE_Hybrid_Sensing}, in which a cluster of adjacent reflecting elements are connected to the same RF chain, acting as a sub-surface.
The phase shifts of all these reflecting elements in each sub-surface can be set either the same or different.
For this structure, the BS-RIS and RIS-UE channels can be separately obtained by sending training pilots to the RIS from the BS and UEs \cite{CE_active}.
Compared to purely passive RISs, the pilot overhead of this hybrid RIS is significantly reduced \cite{CE_Hybrid_Sensing2}. 
The other is to randomly choose a proportion of the RIS reflecting elements connected to the active sensors while letting the others remain passive \cite{RIS_CS_DL}.
It is revealed that the full BS-RIS and RIS-UE channels can be recovered from the sampled channel coefficients sensed at the active elements via compressive sensing.

Despite of these fruitful studies as summarized in Table~\ref{table_CE}, it is still an open problem to achieve accurate RIS channel estimation at acceptable cost especially under the consideration of practical hardware impairments.

\subsection{Interplay of RIS with artificial intelligence}

In order to serve multifarious applications for 6G by dynamically manipulating the wireless environments with RIS, it would be beneficial to introduce machine learning, especially deep learning (DL) technologies~\cite{RIS_AI_Renzo, RIS_Learning, Edge_Learning}. 
It is usually less tractable to solve the joint optimization of RIS reflection and BS precoding due to the varying wireless environment, moving terminals, and the sensitive interactions between active and passive beamformings, especially with extra nonconvex constraints and practical hardware impairments. It has been revealed that DL methods can help solve complicated nonconvex optimization problems with better solutions at lower cost \cite{RIS_AI}. Recently, intensive efforts have been devoted to studying the interplay of RIS with DL techniques \cite{RIS_DRL, RIS_DRL_MultiHop_THZ, RIS_DRL_EE, RIS_DL_CE, RIS_DL_Detect, RIS_DL_bandits, RIS_DL_Indoor, RIS_DL_1bit, RIS_DL_Uplink, RIS_DL_Scatter, RIS_Supervised} .

On one hand, a typical way of exploiting DL for RIS-assisted wireless communication is to treat the entire transmission system as a black-box deep neural network (DNN) \cite{RIS_DL_bandits, RIS_DL_Indoor, RIS_DL_Scatter}.
In such a data-driven DL framework, no expert knowledge of wireless environments or specific mathematical formulations is required.
By using reinforcement learning with DL, the target action of RIS can be learnt by observing instant rewards of rial-and-error interactions with the given state.
Specifically in \cite{RIS_DRL}, deep reinforcement learning (DRL) has been adopted to jointly design the RIS reflection and BS precoding for multiuser MISO communications.
This DRL-based technique has been extended to a multi-hop RIS-empowered system in \cite{RIS_DRL_MultiHop_THZ}.
To maximize the energy efficiency of a downlink multiuser MISO network, a DRL-based framework has been proposed in \cite{RIS_DRL_EE} to jointly design the phase shifts at the RIS and the transmit power at the BS.
Further to optimize the joint reflection of a plurality of RISs, a low-complexity supervised learning approach has been presented in \cite{RIS_Supervised}.
Exploiting 1-bit phase shifters at the RIS, the reflection coefficients can be considered as a binary vector. For this practical deployment, a deep Q-network has been utilized to explore the binary action spaces \cite{RIS_DL_1bit}.
In addition to the downlink transmissions studied in these aforementioned works, for an RIS-assisted multiuser uplink framework, a DL-based approach has been proposed in \cite{RIS_DL_Uplink} to jointly optimize the RIS reflection and symbol detection.
Different from existing alternating optimization algorithms, DL/DRL-based solutions are appropriate to solving joint optimization problems since it can achieve multiple target variables simultaneously as the outputs of a DNN.

On the other hand, a model-driven DL structure has recently attracted increasing attention.
It is of great interest to construct DL networks which are inspired by mathematical models of RIS communication module designs, including channel estimation, data detection, beam management, etc.
Utilizing a portion of active reflecting elements equipped for channel acquisition, a DL-based channel estimation method has been proposed in \cite{RIS_DL_CE}.
The cascade channel through active elements are obtained via conventional pilot training while that through passive elements are predicted via a DNN.
Based on deep unfolding, the authors of \cite{RIS_DL_Detect} has designed a model-driven DL detector for an RIS-aided spatial modulation system. It has been  shown that the proposed DL detector outperforms the traditional greedy one.

In general, the design of data-driven DNNs is a black-box strategy. 
However, the huge training overhead makes it difficult to meet the low latency and high mobility requirements of the 6G network.
Compared to data-driven DNNs, the number of trainable parameters of model-driven networks can usually be significantly reduced, at the cost of sophisticated network design based on mathematical formulations.
It is still an open problem to trade off among system performance, network complexity, and training overhead for developing interplay techniques of RIS and DL.


\subsection{Hardware implementation of RIS}
 
The performance gain of RIS is promised by adequate phase adjustment resolution of reflecting elements \cite{RIS_Continuous}.
PIN diode is the most commonly used to generate the phase shifts of RIS \cite{RIS_Practical}.
To be specific, discrete phase shifts are determined by the states of a cluster of PIN diodes with each diode switched on/off.
A PIN diode provides one-bit phase resolution \cite{RIS_Survey_Liang}.
The increment of phase resolution usually leads to an exponential growth of hardware cost, which is challenging, especially for RIS with a large number of reflecting elements.

In addition, RIS is dynamically controlled by a micro-controller and field-programmable gate array (FPGA) to generate the desired phase shifts for real-time communications.
Due to the rapidly varying wireless environments and fast moving terminals, the system performance with RIS is highly restricted by the control response time of reflecting elements.
For practical implementations, it is difficult to improve the device response bandwidth which usually results in a dramatic increment of hardware cost and power consumption.
Hence, it is an interesting problem to further consider tradeoffs between the RIS performance and the hardware cost in practical RIS deployments.

The multiplicative path loss of RIS, which acts as an inherently passive surface, inevitably leads to performance degradation especially when the RIS is located far from transceiver terminals.
Minimal active devices have been exploited to combine with this passive RIS to eliminate the effect of double path loss.
Particularly, a power amplifier has been incorporated with two RISs to form an amplifying RIS \cite{RIS_PA}. Compared to the passive RIS-assisted systems, this architecture significantly improves the energy efficiency of wireless communications.
In addition, some active devices can provide the abilities of signal processing to further enhance system performance at the cost of higher hardware and power consumption. 
The hardware design of RIS with minimal active devices is an important issue to be studied.

Despite of the aforementioned challenges, the evolution of metamaterial provides great opportunities for the development of RIS. 
One newly emerging technology is to enable the RIS to conduct mathematical calculations and artificial intelligence computations by adopting energy efficient computational metamaterials \cite{Computing_Materials}.
Based on this new hardware implementation, communication and computation can be efficiently integrated on RIS to meet the variety of requirements of the 6G network.
Great effort are expected to be taken to this promising field.



\subsection{Holographic surface}
The concept of holographic surface is arising as a new paradigm shift in RIS-assisted wireless communications.
To be specific, a holographic surface consists of an ultra large number of tiny reflecting elements densely integrated at a compact surface to yield a spatially continuous (or quasi-continuous) aperture \cite{RIS_Holo, RIS_Holo_Meta}.
In its asymptotic form, a holographic surface is considered to be equipped with an infinite number of elements separated by an infinitesimal spacing \cite{Holo_Model}.
Compared to a conventional RIS, the effective reflection area of a holographic surface is larger and thus a higher fraction of energy of incident signals can be reflected towards a desired direction.
On the other hand, the sidelobe power leakage of a holographic surface is significantly suppressed compared to that of a critically-spaced RIS.
Thanks to these advantages, a few efforts have been devoted to investigating communication systems equipped with holographic surfaces.
In \cite{RIS_Thz_Holo_GaoFF}, an effective baseband channel model has been mathematically formulated for a holographic RIS, and the beam pattern of the holographic RIS has also been analyzed in the context of a massive MIMO system.
Utilizing the derived expressions for holographic beam patterns, two beamforming schemes of holographic RIS have been proposed for channel estimation and data transmission. 
In \cite{RIS_Holo_Deng}, the joint optimization of holographic beamforming and digital precoding has been studied to maximize the sum rate of a multiuser communication system.
Moreover, the study in \cite{RIS_Holo_SE} has investigated the spectral efficiency of a multiuser holographic MIMO communication system with MRT and zero-forcing precoding.
It is revealed that the spectral efficiency increases when more elements separated by a fixed spacing are equipped at the holographic transmitter and receiver.

Despite of these studies as discussed above, there are still multiple key challenges for transceiver design of holographic surface.
On one hand, it is still extraordinarily challenging to acquire the CSI of a holographic surface composed of an ultra large number of elements.
On the other hand, the element spacing of a holographic surface is an important design parameter to be optimized.
Specifically, a narrower beamforming can be established to achieve better performance by a holographic surface with smaller element spacing.
However, when the element spacing excessively decreases, the mutual coupling effect between adjacent elements becomes overwhelmingly stronger and the signal correlations correspondingly increase, leading to inevitable performance degradation.


\section{Conclusion}
\label{Sec_conc}
In this paper, we provide an overview on RIS which is regarded as a key enabling technique for the 6G network. 
Specifically, we elaborate on two core functionalities of RIS, i.e., reflection and modulation, as well as their substantial benefits to wireless communication systems.
Then, we exposes the ability of RISs to integrate with a variety of emerging 6G applications.
To meet the prominent requirement of secure communications in the 6G network, we discuss on the contributions of RISs to enhance physical-layer security in terms of secrecy rate and SOP.
In particular, we propose a typical case study exemplifying the benefits of RISs to secure communications.
Both theoretical analysis and simulation results demonstrate the impact of RISs on the ergodic secrecy rate.
Finally, we also listed a multitude of challenges and open opportunities in order to reap the full potentials of RISs.
We hope that this overview provides a useful guide for further research in the field of RIS-assisted wireless communications.

\begin{appendix}

\section{Proof of Observation \ref{Rs_Theorem_LowB}}
\label{app_lemma_eta}

Assuming $l_B\geq l_E$ in \eqref{Rs_opt}, we have
\begin{align}
\mathbb{E}\left\{R_S\right\}&=\mathbb{E}\left\{\log_2\left(1+\frac{l_Al_B MN^2P}{\sigma_n^2}\right)-\log_2\left(1+\frac{l_Al_E MP}{\sigma_n^2}\left| \mathbf{g}_E^H\mathbf{g}_B \right|^2\right)\right\}
\nonumber\\
&\geq
\log_2\left(1+\frac{l_Al_B MN^2P}{\sigma_n^2}\right)-\log_2\left(1+\frac{l_Al_E MP}{\sigma_n^2}\mathbb{E}\left\{\left| \mathbf{g}_E^H\mathbf{g}_B \right|^2\right\}\right)
\label{Rs_bound_a}\\
&\rightarrow
\log_2\left(1+\frac{l_Al_B MN^2P}{\sigma_n^2}\right)-\log_2\left(1+\frac{l_Al_E MP}{\sigma_n^2}\eta\right),
\label{Rs_bound_b}
\end{align}
where \eqref{Rs_bound_a} uses the Jensen's inequality and \eqref{Rs_bound_b} is obtained by applying the following Lemma \ref{lemma_eta}.

\begin{lemma}
\label{lemma_eta}
For two independent channel vectors $\mathbf{g}_{B}$ and $\mathbf{g}_{E}$, it follows for large $N$ that
\begin{align}
&\mathbb{E}\left\{|\mathbf{g}_{E}^H\mathbf{g}_{B}|^2\right\}\rightarrow \eta,
\end{align}
where $\eta$ is defined as $\eta\triangleq N-\frac{2}{\pi^2}(N-1)+\frac{2N}{\pi^2}(\ln N +a)$.
\end{lemma}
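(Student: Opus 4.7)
The plan is to exploit the ULA structure of the signature vectors together with the uniform distribution of the angular parameters to rewrite the second moment as a trigonometric sum, reduce each term to a squared Bessel function, and then apply the large-argument asymptotic of $J_0$ together with a harmonic-sum estimate to land exactly on the closed-form $\eta$.

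First I would expand $\mathbf{g}_{E}^H\mathbf{g}_{B}=\sum_{n=0}^{N-1}e^{-jn\alpha}$ with $\alpha\triangleq 2\pi(d/\lambda)(\cos\psi_B-\cos\psi_E)$, square the modulus, and collect diagonal/off-diagonal pieces to obtain
\begin{equation*}
|\mathbf{g}_{E}^H\mathbf{g}_{B}|^2 = N + 2\sum_{k=1}^{N-1}(N-k)\cos(k\alpha).
\end{equation*}
Taking expectations, using independence of $\psi_B$ and $\psi_E$ with the product-to-sum identity for cosine, and noting that the map $\psi\mapsto\pi-\psi$ flips the sign of $\cos\psi$ and hence kills the odd moment $\mathbb{E}\{\sin(k\pi\cos\psi)\}$, I reduce the expectation under the standard half-wavelength spacing $d/\lambda=1/2$ to
\begin{equation*}
\mathbb{E}\{\cos(k\alpha)\} = \bigl(\mathbb{E}\{\cos(k\pi\cos\psi)\}\bigr)^2.
\end{equation*}

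Next I would identify $\mathbb{E}\{\cos(k\pi\cos\psi)\}=J_0(k\pi)$ via the integral representation $J_0(x)=\tfrac{1}{\pi}\int_0^\pi\cos(x\cos\theta)\,d\theta$, and invoke the large-argument asymptotic $J_0(x)\sim\sqrt{2/(\pi x)}\cos(x-\pi/4)$. Evaluating at $x=k\pi$, where $\cos^2(k\pi-\pi/4)=1/2$ for every integer $k$, yields $J_0(k\pi)^2\sim\tfrac{1}{\pi^2 k}$. Substituting into the series and splitting the factor $(N-k)/k$ gives
\begin{equation*}
\mathbb{E}\{|\mathbf{g}_{E}^H\mathbf{g}_{B}|^2\} \;\longrightarrow\; N + \frac{2N}{\pi^2}\sum_{k=1}^{N-1}\frac{1}{k} - \frac{2(N-1)}{\pi^2},
\end{equation*}
after which the harmonic-number estimate $\sum_{k=1}^{N-1}k^{-1}\to \ln N + a$ produces $\eta$ exactly.

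The main obstacle I anticipate is the uniform-in-$k$ justification of replacing $J_0(k\pi)^2$ by its leading asymptotic inside the weighted sum: the next-order correction is $O(k^{-2})$, so the residual contribution is bounded by $\sum_{k=1}^{N-1}(N-k)\,O(k^{-2})=O(N)$, which sits strictly below the dominant $\tfrac{2N\ln N}{\pi^2}$ term and is therefore harmless at the asymptotic order claimed by the lemma—consistent with the precision needed to feed Observation~\ref{Rs_Theorem_LowB} through the Jensen step at \eqref{Rs_bound_a}.
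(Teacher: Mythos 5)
Your proposal is correct and follows essentially the same route as the paper's proof: expand the inner product into a weighted sum over element-index differences, use independence of $\psi_B$ and $\psi_E$ to factor the expectation into $J_0^2(k\pi)$ via the integral representation, apply the large-argument asymptotic $J_0(k\pi)^2\sim 1/(\pi^2 k)$, and close with the harmonic-sum/Euler-constant estimate. The only differences are cosmetic (you work with the real cosine form and the product-to-sum identity rather than complex exponentials) plus your added remark bounding the $O(k^{-2})$ residual of the Bessel asymptotic, which is a slightly more careful justification than the paper's passing claim that the asymptotic is tight even for small arguments.
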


\begin{proof}
According to the expressions of $\mathbf{g}_{B}$ and $\mathbf{g}_{E}$,
$\mathbb{E}\{|\mathbf{g}_{E}^H\mathbf{g}_{B}|^2\}$ is evaluated as
\begin{align}
\mathbb{E}\left\{|\mathbf{g}_{E}^H\mathbf{g}_{B}|^2\right\}
=&\mathbb{E}\left\{\!\sum_{n=0}^{N-1}e^{jn\pi\!(\!\cos\psi_E\!-\!\cos\psi_B\!)\!} \times \sum_{n=0}^{N-1}e^{-jn\pi\!(\!\cos\psi_E\!-\!\cos\psi_B\!)\!}\!\right\}\nonumber\\
=&N+\sum_{n=1}^{N-1}(N-n)
\times\left[\mathbb{E}\left\{e^{jn\pi\!(\!\cos\psi_E\!-\!\cos\psi_B\!)\!}\right\}\!+\!\mathbb{E}\left\{e^{-jn\pi\!(\!\cos\psi_E\!-\!\cos\psi_B\!)\!}\right\}\right]\nonumber\\
=&N\!+\!\sum_{n=1}^{N-1}(N\!-\!n)\!\times\!\left[\mathbb{E}\left\{e^{jn\pi\!\cos\psi_E\!\!}\right\}\mathbb{E}\left\{e^{-jn\pi\!\cos\psi_B\!\!}\right\}
+\mathbb{E}\left\{e^{-jn\pi\!\cos\psi_E\!\!}\right\}\mathbb{E}\left\{e^{jn\pi\!\cos\psi_B\!\!}\right\}\right]
\label{eta_a}\\
=&N+2\sum_{n=1}^{N-1}(N-n)J_0^2(n\pi)\label{eta_b}\\
\rightarrow &N+2\sum_{n=1}^{N-1}\frac{N-n}{n\pi^2}\label{eta_c}\\
=&N-\frac{2}{\pi^2}(N-1)+\frac{2N}{\pi^2}\sum_{n=1}^{N-1}\frac{1}{n}\nonumber\\
\rightarrow&N-\frac{2}{\pi^2}(N-1)+\frac{2N}{\pi^2}(\ln N +a),
\label{eta_d}
\end{align}
where \eqref{eta_a} uses the fact that $\psi_E$ is independent of $\psi_B$
and \eqref{eta_b} exploits the equality
\begin{align}
\mathbb{E}\left\{e^{jn\pi\cos\psi_B}\right\}=\mathbb{E}\left\{e^{-jn\pi\cos\psi_B}\right\}=\mathbb{E}\left\{e^{jn\pi\cos\psi_E}\right\}=\mathbb{E}\left\{e^{-jn\pi\cos\psi_E}\right\}=J_0(n\pi),
\end{align}
where $J_{\nu}(\cdot)$ is the ${\nu}$th Bessel function.
Take $\mathbb{E}\left\{e^{jn\pi\cos\psi_B}\right\}$ for instance.
Since $\psi_B$ follows the uniform distribution U$[0,\pi]$, we have
\begin{align}
\mathbb{E}\left\{e^{jn\pi\cos\psi_B}\right\}
&=\mathbb{E}\left\{\cos (n\pi\cos\psi_B)+j\sin (n\pi\cos\psi_B)\right\}\nonumber \\
&=\!\frac{1}{\pi}\!\int_0^{\pi}\cos (n\pi\cos\psi_B) \textrm{d}\psi_B\!+\!\frac{j}{\pi}\int_0^{\pi}\sin (n\pi\cos\psi_B) \textrm{d}\psi_B\nonumber \\
&=J_0(n\pi),
\label{J}
\end{align}
where the last step is obtained by letting $m=0, z=n\pi$, and $x=\psi_B$ in the integral equations
\begin{align}
\int_0^{\pi}\cos(z\cos x)\cos mx \textrm{d}x=\pi\cos\frac{m\pi}{2}J_m(z),
\end{align}
and
\begin{align}
\int_0^{\pi}\sin(z\cos x)\cos mx \textrm{d}x=\pi\sin\frac{m\pi}{2}J_m(z),
\end{align}
which are given in \cite[Eqs. (18), (13), pp. 425]{table}.
The step in \eqref{eta_c} exploits the asymptotical result \cite[Eq. 9.2.1]{handbook}
\begin{align}
\label{J2}
J_{\nu}(x)\rightarrow \sqrt{\frac{2}{\pi x}}\cos \left(x-\frac{\nu\pi}{2}-\frac{\pi}{4}\right),
\end{align}
for $|x|\rightarrow \infty$ with $\nu=0$.
Note that the asymptotic equality in \eqref{J2} is tight even for small $x$.
Finally, \eqref{eta_d} follows by large $N$ and the definition of the Euler's constant \cite{handbook} as
\begin{align}
\label{Euler}
a\triangleq\lim_{n\rightarrow\infty} \left[\sum_{k=1}^{k=n-1}\frac{1}{k}-\ln n\right].
\end{align}
\end{proof}

\end{appendix}

\end{document}